\title{On the Price of Anarchy of Cost-Sharing in Real-Time Scheduling Systems}
\author{Eirini Georgoulaki\inst{1} \and Kostas Kollias\inst{2}}
\institute{University of Athens, \email{eirini.geo.98@gmail.com}
\and 
Google Research, \email{kostaskollias@google.com}}
\date{}
\begin{document}
\maketitle

\begin{abstract}
We study cost-sharing games in real-time scheduling systems where the activation cost of the server at any given time is a function of its load. We focus on monomial cost functions and consider both the case when the degree is less than one (inducing positive externalities for the jobs) and when it is greater than one (inducing negative externalities for the jobs). For the former case, we provide tight price of anarchy bounds which show that the price of anarchy grows to infinity as a polynomial of the number of jobs in the game. For the latter, we observe that existing results provide constant and tight (asymptotically in the degree of the monomial) bounds on the price of anarchy. We then switch our attention to improving the price of anarchy by means of a simple coordination mechanism that has no knowledge of the instance. We show that our mechanism reduces the price of anarchy of games with $n$ jobs and unit activation costs from $\Theta(\sqrt{n})$ to $2$. We also show that for a restricted class of instances a similar improvement is achieved for monomial activation costs. This is not the case, however, for unrestricted instances of monomial costs for which we prove that the price of anarchy remains super-constant for our mechanism.
\end{abstract}

\section{Introduction}

The model of cost-sharing in real-time scheduling systems was introduced by \cite{T18} in response to the emergence and popularity of cloud computing as well as to the efforts to reduce power consumption in large computing systems \cite{A10,BGNS01,CGK14,IP05,KSST15}. In the model studied in \cite{T18} there is a server and a collection of $n$ jobs that are to be scheduled on the server. Each job $j$ has a release time $r_j$ and a deadline $d_j$. Time is slotted and each job gets to select which slot in $[r_j,d_j)$ it will be scheduled on. The server has a unit activation cost per time slot, which models the energy spent to keep the server open. This is an expressive model for many applications in cloud computing and data center optimization, which makes the understanding of the inefficiency of such a system and ultimately its improvement important tasks. In a more general model, the activation cost can depend on the load that the server has to process at any given time slot $t$, i.e., the energy spent will be a function $c(l_t)$ that depends on the number of jobs $l_t$ that are to be processed during $t$. This generalized model is the main focus of this work.

In the standard cost-sharing setting studied in \cite{T18}, each of the jobs processed at time $t$ assumes an equal share of the server activation cost $c(l_t)$ (or a proportional share for a more general setting where each job places a different load on the server). Given this rule, we would expect each job $j$ to optimize for its individual cost share and declare the slot that minimizes the ratio of the activation cost to the number of jobs (which is precisely the individual cost share) among all slots in its window $[r_j,d_j)$. When this is true for every job, we have an assignment that is a {\em Nash equilibrium} (NE). The inefficiency of a NE is captured by the {\em price of anarchy} (PoA) which is the worst case ratio of the total cost in a NE assignment over the total cost in the optimal assignment.

For the base case of unit activation costs we have $c(0)=0$ and $c(l)=1$ for every $l>0$. The PoA of such games was shown in \cite{T18} to be $\Theta(\sqrt{n})$ and the question of what happens when the server has an activation cost that depends on the load placed on it was posed as an open problem. A major part of our results focuses precisely on answering this question. We study cost functions of the form $c(x)=x^d$ for some parameter $d>0$. Note that for $d<1$ the cost share function $c(l)/l$ is monotonically decreasing and that for $d>1$ it is monotonically increasing. The game we study here belongs to the class of {\em congestion games} \cite{R73}, the PoA of which has received significant attention in the literature. Our games are equivalent to singleton congestion games with uniform resources arranged on a line and with strategy spaces that correspond to intervals on this line. We provide tight PoA bounds for the case of positive externalities (which is the best motivated case for our application of interest), i.e., $d<1$, showing that the PoA is $\Theta(n^{(1-d)/2})$. For $d>1$, the upper bound for general congestions games shows the PoA does not depend on the number of jobs and is at most $d^{\Theta(d)}$ (i.e., constant for fixed $d$). We observe that an early $d^{\Theta(d)}$ lower bound instance \cite{AAE05} designed for routing games applies to our setting, showing that the PoA is in fact $d^{\Theta(d)}$.

Subsequently, we focus on the design of {\em coordination mechanisms} for the problem in an effort to improve the PoA. A coordination mechanism \cite{CKN04} is a set of a-priori rules the system designer can set without knowledge of the instance. A coordination mechanism can, for example, modify the cost shares of jobs, expand or restrict their strategy spaces, etc. The modified rules of the game under the coordination mechanism will change the set of NE outcomes, hopefully to the improvement of the PoA. We design a simple coordination mechanism that merely expands the strategy space of each job by asking them to declare, not only a slot, but also a payment. It is clear that this mechanism requires no knowledge of the specifics of the instance and is very simple to implement. We show that this simple modification has a surprisingly strong impact on the PoA for the case of unit activation costs, reducing it from infinity --- specifically $\Theta(\sqrt{n})$ --- down to $2$. For monomial cost functions of degree less than $1$, we prove that our mechanism has super-constant PoA for general instances, however, we also prove that it does achieve an improvement from super-constant to constant for a special family of instances. Specifically, for instances such that the optimal solution uses a single slot (which includes, for example, instances with a common release time or a common deadline), our mechanism reduces the PoA from $\Theta(n^{(1-d)/2})$ to $\Theta(1)$. (Recall that for monomials of degree larger than $1$, it is known that the PoA of congestion games is already constant \cite{ADGMS11} without applying any coordination mechanism.)
\subsection{Related Work}

The work most closely related to our paper is the one in \cite{T18} where the model we study was introduced and various results were obtained for the case of constant (but possibly time-dependent) server activation costs with respect to the price of anarchy and the related concepts of the strong price of anarchy (inefficiency of equilibria with respect to coordinated deviations) and the price of stability (inefficiency of the best Nash equilibrium as opposed to the worst). Our work also has ties to literature on the price of anarchy of congestion games, cost-sharing in congestion games, and coordination mechanisms, all three of which we discuss below.

Congestion games where introduced by Rosenthal in \cite{R73} as a class of games that guarantee the existence of (pure) Nash equilibria. In a congestion game, there is a set of resources that the participants can use. Each one of the participants has a strategy space that allows them to select one of given subsets of resources, something that allows for various expressive models, including routing in networks. The cost of each resource, which is a function of the number of participants using it, is distributed equally among its users. The price of anarchy of a special case of congestion games was first studied in \cite{KP09}, where the price of anarchy was first introduced. A long sequence of follow-up work gave a strong understanding of the price of anarchy in congestion games \cite{ADGMS11,ADKTWR08,AAE05,CK05,R15}. Generalizations to weighted models have also been studied \cite{ADGMS11,BGR14} albeit with the drawback that, in contrast to standard congestion games, existence of a (pure) Nash equilibrium is not guaranteed \cite{GMV05}.

Cost-sharing aspects in the study of congestion games were brought into the picture to correct for the absence of equilibria in weighted games. The work in \cite{KR15} shows how the Shapley value cost-sharing method can be applied to restore pure equilibria, whereas \cite{GMW14} shows that the more general class of generalized weighted Shapley values are the only ones that can guarantee this property. The price of anarchy of the induced games has been the subject of extensive study, with examples including \cite{GKK15,GKR16,KS15,MP17,MW13,RS16,FH13}.

A more general approach seeking to improve the price of anarchy is that of coordination mechanisms. A coordination mechanism gives the designer more freedom in designing the game. This freedom includes modifying the strategy spaces of the participants or changing how costs are defined on the resources. For example, coordination mechanisms were applied in a multiple machine makespan minimization scheduling setting in \cite{CKN04} by means of introducing different scheduling policies on different machines. Follow up work on coordination mechanisms includes \cite{AFJMS15,C13,CGMO15,ILMS09}.

\subsection{Summary of Our Results and Roadmap}

Section \ref{model} presents our model and notation. In Section \ref{decreasing} we focus on the case of monomial cost functions of degree $d<1$ and prove that the PoA is approximately $2 n^{(1-d)/2}/(1+d)$, where $n$ is the number of jobs in the game. In Section \ref{increasing} we discuss the case when $d>1$ and observe that early results on congestion games \cite{AAE05} can be used to show that the PoA is $d^{\Theta(d)}$. Finally, in Section \ref{mechanism}, we define our coordination mechanism and, in Section \ref{coordination-unit}, we prove that it achieves a strong improvement on the PoA for the case of unit activation costs (from $\Theta(\sqrt{n})$ to $2$). Switching to monomials with $d<1$, even though it is the case that the PoA grows to infinity for our mechanism as well, we are still able to show in Section \ref{coordination-monomial} that we can reduce the PoA from $\Theta(n^{(1-d)/2})$ to $\Theta(1)$ for the class of {\em common slot instances}, in which the optimal uses a single slot (note that this class includes, among others, instances with a common release time or a common deadline). As stated earlier, the PoA for the case with $d>1$ is already constant. Section \ref{conclusion} concludes the paper.
\section{Preliminaries}\label{model}

The game is specified by the following parameters: (a) a cost function $c(x)$ that gives the cost at any given time step as a function of the number of jobs $x$ that have to be processed at that time step, (b) a time horizon $T$ that specifies the available time slots as $t=1,2,\dots,T$, and (c) a set of jobs $J$, with each $j\in J$ having an integer release time $r_j$ and an integer deadline $d_j$ such that $0<r_j<d_j<T$. 

Let $s_j$ denote the slot declared by job $j$ such that $r_j\le s_j<d_j$ and let $s$ denote the vector of declared slots. The load on slot $t$, $l_t(s) = |\{j : s_j = t\}|$, is the number of jobs declaring slot $t$. The cost on slot $t$ is then $c(l_t(s))$ and the total cost is:
\[ C(s) = \sum_{t=1}^{T} c(l_t(s)). \]

An assignment $s$ is a {\em Nash equilibrium} (NE) when for every job $j$ we get:
\[ \frac{c(l_{s_j}(s))}{l_{s_j}(s)} \le \frac{c(l_t(s)+1)}{l_t(s)+1}, \]
for every $t\neq s_j$ in $[r_j,d_j)$. This expression suggests the cost share at the slot declared by $j$ is at most the cost share $j$ would get by deviating to any other slot in its interval. The inefficiency of equilibrium solutions is given by the {\em price of anarchy} (PoA), which is the worst case ratio of the total cost in a NE over the total cost in the optimal assignment:
\[ PoA = \frac{\max_{s\textrm{~a~NE}}C(s)}{\min_{s^*}C(s^*)}. \]
\section{Monomial Cost Functions}

\subsection{Decreasing Cost Shares: $d<1$}\label{decreasing}

In this section we analyze the PoA for the case when the cost function equals $x^d$ with $d<1$, for which the cost share $c(x)/x$ is strictly decreasing. The main result of the section is the following theorem, which we prove in two subsequent lemmas.

\begin{theorem}
For games with $n$ jobs and cost function $c(x)=x^d$ with $d<1$, the worst case PoA is $\Theta(n^{(1-d)/2})$.
\end{theorem}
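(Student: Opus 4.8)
The plan is to establish matching upper and lower bounds, each in its own lemma, as the excerpt already announces. For the \emph{upper bound}, I would take an arbitrary NE assignment $s$ and the optimal assignment $s^*$ and bound $C(s)/C(s^*)$. The natural approach is a charging/smoothness-style argument: for each job $j$, the NE condition says its cost share $c(l_{s_j}(s))/l_{s_j}(s) = l_{s_j}(s)^{d-1}$ is at most $c(l_t(s)+1)/(l_t(s)+1) = (l_t(s)+1)^{d-1}$ for every $t$ in its window, in particular for $t = s^*_j$. Summing the NE inequalities over all jobs and relating $\sum_j l_{s_j}(s)^{d-1} = \sum_t l_t(s)^d = C(s)$ to the right-hand side $\sum_j (l_{s^*_j}(s)+1)^{d-1}$ gives a bound in terms of both $s$ and $s^*$. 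Since $d-1<0$, the exponents make the extreme case the one where the NE piles all $n$ jobs onto as few slots as possible while the optimum spreads them out (or vice versa); carefully optimizing the resulting expression — e.g.\ comparing the worst NE, which concentrates load to exploit the decreasing cost share, against an optimum that can afford to use $\Theta(\sqrt{n})$ slots of size $\Theta(\sqrt{n})$ — should yield $C(s)/C(s^*) = O(n^{(1-d)/2})$, with the constant $2/(1+d)$ promised in the roadmap coming out of the calculation.

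For the \emph{lower bound}, I would construct an explicit family of instances. The guiding intuition: a single slot of load $k$ costs $k^d$, i.e.\ cost-per-job $k^{d-1}$, which \emph{decreases} in $k$, so jobs have an incentive to crowd. Take roughly $\sqrt{n}$ "blocks" of $\sqrt{n}$ jobs each, arranged so that in the optimal solution each block occupies its own private slot (cost $\approx \sqrt{n}\cdot(\sqrt{n})^d = n^{(1+d)/2}$ total), but the release-time/deadline windows are nested or staggered so that a NE exists in which all $n$ jobs pile onto a single shared slot (cost $n^d$) — or more precisely a configuration whose cost ratio is $\Theta(n^{(1-d)/2})$. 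The windows must be designed so that no job can profitably deviate out of the crowded slot: because the shared slot already has huge load $\approx n$, its per-job share $\approx n^{d-1}$ is smaller than the share $\approx 2^{d-1}$ a job would get by jumping to an almost-empty alternative slot in its window, so the crowded configuration is indeed a NE. Verifying this NE condition for every job simultaneously, while keeping the optimum forced to spread out, is the delicate combinatorial part of the construction.

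I expect the \textbf{main obstacle} to be the lower-bound construction rather than the upper-bound estimate: one must choose release times and deadlines so that (i) the optimal assignment is genuinely forced to use $\Theta(\sqrt{n})$ separate slots (no clever alternative optimum does better), and (ii) the crowded assignment is a legitimate NE, meaning \emph{every} job's window actually contains the crowded slot yet offers no slot with a strictly smaller share. Balancing these two requirements — the optimum wants disjoint windows, the NE wants a common slot in every window — forces a careful staggered/nested interval design, and getting the parameters so the cost ratio is exactly $\Theta(n^{(1-d)/2})$ (matching the upper bound up to constants) is where the real work lies. The upper bound, by contrast, should follow fairly mechanically once the summed NE inequalities are set up and the worst-case load profile is identified via convexity/monotonicity of $x\mapsto x^{d-1}$.
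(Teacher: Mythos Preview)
Your plan has the roles of the Nash equilibrium and the optimum reversed in the lower bound, and the upper-bound argument you sketch is too weak by a square-root factor.

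\textbf{Lower bound.} Since $c(x)=x^d$ with $d<1$ is concave, the \emph{cheapest} way to place $n$ jobs is to concentrate them on a single slot (cost $n^d$); the \emph{expensive} configurations are the spread-out ones. So in the construction that works, the optimum puts all jobs on one common slot while the bad NE spreads them. The paper uses slots $-h,\ldots,0,\ldots,h$ and, for each $j=1,\ldots,h$, creates $j$ jobs with window $[-j,0]$ and $j$ jobs with window $[0,j]$. The optimum is everything on slot $0$ (cost $\Theta(n^d)$ with $n=h^2+h$); the NE puts the $j$ jobs with window $[-j,0]$ on slot $-j$ and symmetrically on the right, so slot $t$ carries exactly $|t|$ jobs (cost $2\sum_{j=1}^{h} j^d=\Theta(n^{(1+d)/2})$). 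This is a NE because every alternative slot in a job's window has \emph{fewer} jobs and hence, since $x\mapsto x^{d-1}$ is decreasing, a larger cost share. Your proposed construction---optimum spread over $\sqrt{n}$ blocks, NE crowded on one slot---has the crowded configuration cheaper than your ``optimum,'' so it cannot serve as a PoA lower bound; your stated tension (``the optimum wants disjoint windows, the NE wants a common slot'') is exactly inverted.

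\textbf{Upper bound.} The summed-NE-inequality (smoothness) argument you outline gives only $O(n^{1-d})$, not $O(n^{(1-d)/2})$. Summing $l_{s_j}(s)^{d-1}\le (l_{s^*_j}(s)+1)^{d-1}$ over jobs yields $C(s)\le \sum_t l_t(s^*)\,(l_t(s)+1)^{d-1}$; on the instance above this reads $C(s)\le n\cdot 1^{d-1}=n$ against $C(s^*)=n^d$, i.e.\ a ratio bound of $n^{1-d}$, loose by precisely the factor $n^{(1-d)/2}$ you are after. What is missing is a combinatorial property of the \emph{interval} strategy spaces that a generic smoothness argument cannot see. The paper's key claim: for the set $J(t)$ of jobs sharing an optimal slot $t$, and for any load level $x$, at most two NE slots can simultaneously have load exactly $x$ and contain a job from $J(t)$. (If three did, a job from $J(t)$ on the outermost one could move to the middle one---which lies in its window because both $t$ and its current slot do---and strictly lower its share from $c(x)/x$ to $c(x+1)/(x+1)$.) This forces the total paid by $J(t)$ in the NE to be at most $2\sum_{x=1}^{h_t} c(x)$ with $h_t\approx\sqrt{l_t(s^*)}$, and that is where the extra square root comes from.
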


\begin{lemma}\label{decreasing-upper-bound}
Given a game with $n$ jobs and cost function $c(x)=x^d$ with $d<1$, for any NE $s$ and any optimal assignment $s^*$ we get:
\[ \frac{C(s)}{C(s^*)} = O\left(n^{\frac{1-d}{2}}\right). \]
\end{lemma}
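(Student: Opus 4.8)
The plan is to bound the equilibrium cost against the optimum by exploiting the very rigid structure that equilibria have on a line, and then to trade the two sides against each other with a balancing (threshold) argument on the maximum equilibrium load.

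\smallskip\noindent\textbf{Equilibrium structure.} Since the cost share $c(x)/x=x^{d-1}$ is strictly decreasing, the equilibrium condition $l_{s_j}(s)^{d-1}\le (l_t(s)+1)^{d-1}$ is equivalent to $l_{s_j}(s)\ge l_t(s)+1$ for every $t\in[r_j,d_j)$: the slot a job occupies carries strictly more load than any other slot inside its window. I would extract two consequences. (i) If $t^+$ is a slot of maximum load, then every job whose window contains $t^+$ must be assigned to $t^+$ (it could otherwise move there and strictly gain), which gives a recursive decomposition of the instance around maximum-load slots, splitting it into a left and a right sub-instance each of which is itself an equilibrium with strictly smaller maximum load (up to ties, which only decrease the number of jobs at the maximum load). (ii) More usefully, fix any slot $\sigma$ and look at the occupied equilibrium slots lying to the right of $\sigma$ whose job (or one of whose jobs) has $\sigma$ in its window: a job on a farther such slot has all the nearer ones in its window, so the equilibrium condition forces the loads of these slots to be strictly increasing; hence there are at most $L$ of them, where $L$ is the maximum equilibrium load, and symmetrically on the left. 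Thus any job the optimum places on $\sigma$ sits, in the equilibrium, on one of at most $2L+1$ slots, each carrying at most $L$ jobs, so $l_\sigma(s^*)=O(L^2)$ for \emph{every} slot $\sigma$.

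\smallskip\noindent\textbf{The two estimates.} On the equilibrium side: trivially $C(s)=\sum_t l_t(s)^d\le\sum_t l_t(s)=n$; and by concavity $C(s)\le m^{1-d}n^d$, where $m$ is the number of occupied equilibrium slots, while mapping each occupied slot to the optimal slot of one of its jobs is at most $(2L+1)$-to-one by (ii), so $m=O(L\cdot m^*)$ with $m^*$ the number of occupied optimal slots. On the optimum side: $C(s^*)\ge m^*$; also $C(s^*)=\sum_t l_t(s^*)^d\ge n^{d-1}\sum_t l_t(s^*)=n^d$; and, feeding in $l_t(s^*)=O(L^2)$, $C(s^*)\ge \sum_t l_t(s^*)\,(O(L^2))^{d-1}=\Omega\!\big(n\,L^{-2(1-d)}\big)$.

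\smallskip\noindent\textbf{Combining.} When $L$ is small, say $L=O(n^{1/4})$, the refined lower bound gives $C(s^*)=\Omega(n^{(1+d)/2})$, and $C(s)\le n$ then yields $C(s)/C(s^*)=O(n^{(1-d)/2})$. When $L\ge\sqrt n$, the occupied equilibrium slots of load comparable to $L$ number only $O(n/L)$ and contribute at most $O((n/L)^{1-d}n^d)$ to $C(s)$, which against $C(s^*)\ge n^d$ is $O((n/L)^{1-d})=O(n^{(1-d)/2})$; the low-load slots that remain are handled by passing to the sub-instances hanging off the maximum-load slot via (i), where the maximum load has strictly dropped, and invoking the bound for that smaller parameter. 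I expect the genuine difficulty to lie precisely in this last step and in the intermediate regime: the global estimate $C(s)\le n$ alone only gives $O(n^{1-d})$, so squeezing out the square root forces one to localize the ``$l_\sigma(s^*)=O(L^2)$'' estimate to the sub-instance in which $\sigma$ actually lives (so that a single heavy slot does not spoil the bound for all the light slots around it) and to carry out the induction on the number of jobs with the maximum equilibrium load as an auxiliary parameter tightly enough that the additive $L^d$ term is absorbed at every level. The bookkeeping across this recursion, rather than any individual inequality, is the main obstacle.
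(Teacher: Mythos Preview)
Your structural observation (ii) is exactly the paper's key claim, phrased slightly differently: since the loads of equilibrium slots that host jobs from a fixed optimal slot $\sigma$ are strictly increasing as one moves away from $\sigma$ on either side, for each load value $x$ there are at most two such slots. But from this common starting point you and the paper diverge sharply, and your route has a genuine gap.

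The gap is the one you yourself flag: the threshold argument only closes when $L=O(n^{1/4})$, the large-$L$ case is sketched with a recursion whose bookkeeping is not carried out, and the entire regime $n^{1/4}\ll L\ll \sqrt n$ is left open. You are right that the naive global estimate $C(s)\le n$ against $C(s^*)\ge n^d$ only gives $O(n^{1-d})$, which is a full square root short. Localizing ``$l_\sigma(s^*)=O(L^2)$'' to sub-instances and inducting on the maximum load can presumably be made to work, but nothing in the proposal shows how the additive $L^d$ term is absorbed level by level, so as written this is not a proof.

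The paper avoids all of this by staying \emph{local} and squeezing more out of observation (ii) than the crude count ``at most $2L+1$ slots, each with at most $L$ jobs''. Fix one optimal slot $t$ with $l_t$ jobs in $J(t)$. Since at most two equilibrium slots of load exactly $x$ can carry jobs from $J(t)$, at most $2x$ jobs of $J(t)$ pay the share $c(x)/x$. Because $c(x)/x$ is \emph{decreasing}, the worst way to distribute the $l_t$ jobs is to pack them onto the smallest loads first: two jobs paying $c(1)$, four paying $c(2)/2$, and so on up to $h_t\approx\sqrt{l_t}$. Hence the cost of $J(t)$ in the equilibrium is at most $2\sum_{j=1}^{h_t} j^d = O(h_t^{1+d})=O(l_t^{(1+d)/2})$, against $l_t^d$ in the optimum, giving a ratio $O(l_t^{(1-d)/2})=O(n^{(1-d)/2})$ for that slot and hence globally. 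The point you missed is that the strictly-increasing-loads structure does not merely bound the \emph{number} of slots by $L$; it forces the per-job cost shares to come from the tail of a decreasing sequence, which is what produces the extra square root directly, with no balancing and no recursion.
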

\begin{proof}
Consider an optimal assignment $s^*$ and focus on a slot $t$ that holds $l_t(s^*)=l_t$ jobs in $s^*$. We will write $J(t)$ for the set including these $l_t$ jobs. The jobs in $J(t)$ have a cost of $c(l_t)$ in $s^*$ and we wish to bound the cost they can have in any given NE $s$.

We begin by proving the claim that, in the NE $s$ and for any positive integer $x$, there exist at most $2$ slots that have $x$ jobs and include jobs from $J(t)$. We prove this claim by contradiction. Suppose there are at least $3$ slots with $x$ jobs that host jobs from $J(t)$. Suppose the median is $t'$. If $t'\ge t$, then we have at least two slots that are greater than or equal to $t$ and hold $x$ jobs including some from $J(t)$, otherwise we have at least two slots that are less than or equal to $t$ with this property. We will treat the case when there are two slots that are greater than or equal to $t$. The other case is symmetric. Call these slots $t'$ and $t''$ with $t''>t'$. Consider some job $j\in J(t)$ that uses $t''$ in $s$. The allowed interval of $j$ includes $t$, since it is scheduled on it in $s^*$, and $t''$, since it is scheduled on it in $s$. Since $t\le t' < t''$, it follows that $t'$ is also in the allowed interval for $j$. However, we can observe that $j$ has an incentive to deviate from $t''$ to $t'$ and improve its cost from $c(x)/x$ to $c(x+1)/(x+1)$. This contradicts the fact that $s$ is a NE and proves our original claim that at most $2$ slots can have jobs from $J(t)$ and exactly $x$ jobs.

Given the above, we return to the task of upper bounding the total payments of jobs in $J(t)$ in $s$. By the claim in the previous paragraph, at most $2x$ jobs from $J(t)$ can pay $c(x)/x$ in $s$. This means at most $2$ jobs can pay the maximum $c(1)/1$, at most $4$ jobs the second highest $c(2)/2$, etc. It follows that the total payments in $s$ of the jobs in $J(t)$ are upper bounded by:
\[ \sum_{j=1}^{h_t} 2c(j), \]
where $h_t$ is the smallest number such that $h_t^2+h_t\ge l_t$. Then the ratio of the total cost paid by the jobs in $J(t)$ in $s$ over the same cost in $s^*$ is at most:
\[ \frac{\sum_{j=1}^{h_t} 2c(j)}{c(l_t)}. \]
Suppose $t$ is in fact the slot that maximizes this ratio, meaning the above expression is an upper bound on the PoA. We get:
\begin{align}
\frac{C(s)}{C(s^*)} &\le \frac{\sum_{j=1}^{h_t} 2c(j)}{c(l_t)} = \frac{\sum_{j=1}^{h_t} 2 j^d}{l_t^d} \nonumber\\
&\le \frac{\int_{0}^{h_t+1}x^d dx}{l_t^d} \le \frac{2(h_t+1)^{1+d}}{(1+d)l_t^d} \nonumber\\
&\le \frac{2(h_t+1)^{1+d}}{(1+d)h_t^d(h_t-1)^d}\nonumber\\
&=O\left(h_t^{1-d}\right)=O\left(l_t^{\frac{1-d}{2}}\right)=O\left(n^{\frac{1-d}{2}}\right). \nonumber
\end{align}
This completes the proof of the upper bound.\qed
\end{proof}

\begin{lemma}\label{decreasing-lower-bound}
For every positive integer $h$ and for every cost function $c(x)=x^d$ with $d<1$, there exists a game with $n=h^2+h$ jobs, and a NE $s$ of that game, such that:
\[ \frac{C(s)}{C(s^*)} = \Omega\left(h^{1-d}\right)=\Omega\left(n^{(1-d)/2}\right), \]
where $s^*$ is an optimal assignment of that game.
\end{lemma}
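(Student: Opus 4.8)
The plan is to build, for each $h$, an instance that realizes the tightness of the argument in Lemma~\ref{decreasing-upper-bound}: an optimal assignment will pack all $n=h^2+h$ jobs onto a single slot, giving $C(s^*)\le c(n)=(h^2+h)^d$, while the same game admits a NE $s$ that spreads these jobs over $2h$ slots with loads $h,h,h-1,h-1,\dots,1,1$ — two slots of each load $1,\dots,h$, which is exactly $2(1+2+\dots+h)=h^2+h=n$ jobs. The equilibrium cost is then $C(s)=2\sum_{k=1}^{h}c(k)=2\sum_{k=1}^{h}k^d$, so $C(s)/C(s^*)\ge 2\sum_{k=1}^{h}k^d/(h^2+h)^d$; bounding the numerator below by $2\int_0^h x^d\,dx=\tfrac{2}{1+d}h^{1+d}$ (since $x^d$ is increasing) and the denominator above by $(2h^2)^d$ gives $C(s)/C(s^*)=\Omega(h^{1-d})$, and since $n=h^2+h\le 2h^2$ this is $\Omega(n^{(1-d)/2})$.

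For the construction I would place the slots on a line as positions $1,\dots,2h+1$ (the time horizon can be taken to be any $T$ larger than the largest deadline below), declare the middle slot $h+1$ the ``common slot'' $O$, and target the profile $s$ in which $O$ is empty, left position $i$ carries load $h+1-i$ for $i=1,\dots,h$, and, symmetrically, right position $h+1+m$ carries load $m$ for $m=1,\dots,h$. It then remains to pick windows so that (i) every job's window contains both its assigned slot and $O$ (this makes $s$ feasible and lets the optimum move everyone onto $O$), and (ii) no job's window reaches another slot whose load in $s$ is at least its own. Concretely I would assign the $h+1-i$ jobs at left position $i$ the window $[\,i,\;2h+2-i\,)$ and the $m$ jobs at right position $h+1+m$ the window $[\,h+2-m,\;h+2+m\,)$; a direct check shows each such window contains its own slot and $O$ and otherwise reaches only slots of strictly smaller load on one side, $O$ in the middle, and slots of strictly smaller load on the other side.

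The only real content is verifying that $s$ is a NE, and that is where I expect the bookkeeping to live. Since the cost share $c(x)/x=x^{d-1}$ is strictly decreasing, a job on a load-$k$ slot pays $k^{d-1}$ and, on moving to a slot of current load $\ell$, would pay $(\ell+1)^{d-1}$; this is a strict improvement exactly when $\ell\ge k$, and for every $\ell\le k-1$ — in particular for the empty slot $O$, where $\ell=0$ — the move is (weakly) non-improving. Hence $s$ is a NE if and only if no job's window contains a slot whose $s$-load is at least its own, which is exactly property (ii): the slots of load $\ge k$ are the two ``arms'' of length $h-k+1$ at the ends of the line, and a load-$k$ job sits at the inner tip of one arm with a window stopping just short of the matching load-$k$ slot at the opposite end. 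Once the NE check is in place, the computation from the first paragraph closes the proof; one may also note that $C(s^*)$ equals $(h^2+h)^d$ exactly, though only the inequality $C(s^*)\le(h^2+h)^d$ is needed.
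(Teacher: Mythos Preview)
Your construction and argument are correct and essentially identical to the paper's proof: a central common slot flanked on each side by slots carrying loads $1,2,\dots,h$, with the optimum packing all $h^2+h$ jobs onto the center and the NE spreading them so that each load level appears twice, giving $C(s)=2\sum_{k=1}^h k^d$ against $C(s^*)=(h^2+h)^d$. The only cosmetic differences are your slot labeling ($1,\dots,2h+1$ versus the paper's $-h,\dots,h$) and that your windows extend past the center into the opposite arm whereas the paper's stop at the center; neither affects the NE check or the cost estimates.
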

\begin{proof}
Our instance has $n=h^2+h$ jobs and $2h+1$ slots. For ease of exposition we will shift time and call the slots $-h,-h+1,\ldots,-1,0,1,\ldots,h-1,h$. There exist $j$ jobs that can use slots $[-j,0]$ and $j$ jobs that can use slots $[0, j]$ for $j=1,2,\ldots,h$. Observe that slot $0$ is the only one that is common to all intervals. The optimal solution $s^*$ would place all jobs on $0$ for a total cost:
\[ C(s^*)=(h^2+h)^d=\Theta(n^{d}). \]

Consider the following assignment $s$, which as we will argue is a NE. Every one of the $j$ jobs with interval $[-j,0]$ selects slot $-j$ and every one of the $j$ jobs with interval $[0,j]$ selects slot $j$. Note that the number of jobs on slot $t$ is $|t|$. This fact implies the assignment is actually a NE, since the jobs on any slot $t$ share the slot with $|t|-1$ other jobs, while every other slot $t'$ between $t$ and $0$ has $|t'|\le |t|-1$ jobs. The cost of this assignment is:
\[ C(s)=\sum_{j=1}^{h}2j^d\ge 2\int_{1}^{h}x^d dx=\frac{2}{d+1}\left(h^{d+1}-1\right)=\Theta\left(n^{(d+1)/2}\right). \]
Taking the ratio $C(s)/C(s^*)$ completes the proof.\qed
\end{proof}

\begin{remark}
From the proofs of Lemmas \ref{decreasing-upper-bound} and \ref{decreasing-lower-bound} it follows that the worst case PoA is in fact approximately:
\[ \frac{2n^{\frac{1-d}{2}}}{1+d}. \]
\end{remark}
\subsection{Increasing Cost Shares: $d>1$}\label{increasing}

We now discuss the PoA for the case when the cost function equals $x^d$ with $d>1$. In this case the cost share $c(x)/x$ is strictly increasing. We observe that the general upper bound on the PoA of congestion games and an early lower bound designed for routing games which applies to our model yield the following theorem.

\begin{theorem}
(\cite{AAE05} Theorem 4.3) For games with cost function $c(x)=x^d$ with $d>1$, the worst case PoA is $d^{\Theta(d)}$.
\end{theorem}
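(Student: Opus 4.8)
The plan is to prove both directions by casting our game as an ordinary (singleton) congestion game and then quoting known results. The first step is to record the exact correspondence. When $c(x)=x^d$, the cost share of a job on a slot of load $l$ is $c(l)/l=l^{d-1}$, and the social cost is $\sum_t c(l_t)=\sum_t l_t\cdot l_t^{d-1}$. Hence our game is literally a congestion game with uniform resources $1,\dots,T$ arranged on a line, latency function $\ell(l)=l^{d-1}$ (a polynomial of degree $d-1$ with nonnegative coefficients), strategy sets equal to the windows $[r_j,d_j)$, and the standard unsplittable fair-cost-sharing model; moreover the equilibrium condition of Section~\ref{model} is exactly the congestion-game equilibrium condition $\ell(l_{s_j})\le\ell(l_t+1)$, and the social cost coincides with the usual $\sum_e f_e\ell_e(f_e)$. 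Consequently both the PoA upper bound for congestion games and any lower-bound instance that can be drawn as such a restricted congestion game transfer without loss.

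For the upper bound I would invoke the known bound on the PoA of congestion games whose latency functions are polynomials of degree $p$ with nonnegative coefficients, which is of the form $p^{\Theta(p)}$, independent of the number of players and of the combinatorial structure of the strategy sets \cite{ADGMS11,AAE05}. Plugging in $p=d-1$ yields an upper bound of $d^{\Theta(d)}$, which in particular does not depend on $n$. Since restricting strategy sets to intervals only shrinks the class of games under consideration, the bound is inherited verbatim.

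For the lower bound I would take the early routing construction behind \cite{AAE05} Theorem~4.3, which certifies a PoA of $p^{\Theta(p)}$ for degree-$p$ polynomial latencies, and argue it can be realized inside our model. Concretely, one exhibits a singleton congestion game with latency $\ell(l)=l^{d-1}$ in which (i) each player has a small strategy set and (ii) the resources can be linearly ordered so that every player's strategy set is a contiguous block; such a block is then a legal window $[r_j,d_j)$ and the resource order becomes the time axis. Because the equilibrium condition, the optimum, and the social cost all translate unchanged under the correspondence above, the $\Omega(d^{\Theta(d)})$ ratio is preserved, matching the upper bound and giving the claimed $\Theta$.

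The step I expect to be the main obstacle is point (ii): verifying that the \cite{AAE05} construction (or a mild variant) has the consecutive-ones/interval property required by our model, i.e.\ that after a suitable relabeling of resources every player's strategy set is an interval on the line. If the original construction does not have this property on the nose, the fallback is to build a scheduling instance from scratch --- for instance a layered family of slots with geometrically growing groups of jobs, each job choosing between two adjacent slots --- engineered to reproduce the same $d^{\Theta(d)}$ gap between an ``imbalanced'' NE and a ``balanced'' optimum; checking the equilibrium inequalities and evaluating the two social costs for such a family would then be the remaining (routine but careful) work.
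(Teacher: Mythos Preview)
Your proposal is correct and matches the paper's treatment: the paper does not give a self-contained proof of this theorem but simply observes that the general congestion-game upper bound applies and that the \cite{AAE05} lower-bound instance fits the interval/singleton structure of the scheduling model, exactly as you outline. Your flagged concern about the consecutive-ones property is the only nontrivial point, and the paper handles it by assertion (later describing the instance informally as one where ``in the optimal solution, each slot has unit occupancy, whereas in the NE, slots become progressively more congested''), which is consistent with your plan.
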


Note that the PoA is constant when $d$ is fixed and is independent of the number of players $n$, in contrast to the unit and $d<1$ cases. In this work, we are mostly interested in this behavior of the PoA as a function of $n$. However, observing the PoA as a function of $d$ is also of interest and we note that the $d^{\Theta(d)}$ expression hides a gap. For example, for $d=2$, the lower bound of \cite{AAE05} is $2$, whereas the general upper bound for congestion games is $5/2$. The lower bound in \cite{AAE05} is a very natural instance, where, in the optimal solution, each slot has unit occupancy, whereas in the NE, slots become progressively more congested. This natural flavor of the instance could tempt us to conjecture that the lower bound is in fact tight for our setting, however, in the next result we show that this is not the case.

\begin{lemma}
For games with cost function $c(x)=x^2$, the worst case PoA is strictly larger than $2$.
\end{lemma}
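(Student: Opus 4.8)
The plan is to establish the bound by an explicit construction: I would exhibit one instance together with an assignment $s$ that I verify is a Nash equilibrium, and a second assignment $s'$, and then check that $C(s) > 2\,C(s')$ by direct computation. Since $\min_{s^*}C(s^*)\le C(s')$, this already gives $C(s)/C(s^*) > 2$, so it is enough to beat twice the cost of \emph{some} assignment, not of a proven optimum.

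First I would pin down the shape the instance must have. For $c(x)=x^2$ the equilibrium inequality $c(l_{s_j}(s))/l_{s_j}(s)\le c(l_t(s)+1)/(l_t(s)+1)$ becomes $l_{s_j}(s)\le l_t(s)+1$ for every $t$ in $[r_j,d_j)$: a job whose slot carries $\ell$ jobs can only have slots of load at least $\ell-1$ in its window. Two observations follow. A slot of load exactly two contributes $4$ to $C(s)$ but its two jobs can cost as little as $2$ elsewhere, i.e.\ only a factor $2$; so to exceed $2$ the equilibrium must force some slot to carry at least three jobs. And such an overloaded slot can only be held in place if the slots reachable from it carry load at least two, which in turn must be flanked by slots of load at least one; only the outermost unit-load slots can be given wide windows reaching far-away empty slots. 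I would therefore lay the instance out as nested ``shells'': an overloaded core (one or more slots of load $\ge 3$), a shell of load-$2$ buffer slots, and an outer shell of unit-load escape slots, giving the core and buffer jobs the smallest windows the equilibrium permits and the escape jobs windows that stretch out to a block of otherwise-empty slots.

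Then I would carry out the verification and the accounting: (i) check $l_{s_j}(s)\le l_t(s)+1$ for a core job, a buffer job, and an escape job (this is where the window sizes are forced); (ii) evaluate $C(s)=\sum_t c(l_t(s))$; (iii) define $s'$ by sending every escape job to its own empty slot, thereby vacating its former slot, and then redistributing the core and buffer jobs into the vacated and core/buffer slots so that each occupied slot ends up with as small a load as possible; (iv) evaluate $C(s')$ and verify $C(s)>2\,C(s')$.

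The crux --- and the reason a naive version is not enough --- is the opposition between making $C(s)$ large and keeping $C(s')$ small: the jobs responsible for the large equilibrium cost must be trapped at high load, which forces their windows to see only comparably loaded slots, yet at $s'$ those same jobs must be free to disperse. Balancing this amounts to choosing the multiplicities correctly (how many load-$2$ buffers per overloaded core slot, how many unit-load escapes per buffer), since careless choices only approach the value $2$ from below --- for instance a single load-$3$ core slot with two load-$2$ buffers and four escapes yields only $21/11$ --- so one may have to iterate the gadget, or use a deeper core, so that the surplus contributed by the overloaded slots strictly outweighs the dilution from the escape slots. As a consistency check, the general upper bound for congestion games \cite{ADGMS11} caps the ratio at $5/2$, so any such construction necessarily lands in the interval $(2,5/2]$.
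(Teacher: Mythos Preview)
Your approach is the paper's: a layered instance in which the equilibrium loads step down by one as you move outward (so the NE condition $l_{s_j}\le l_t+1$ is tight at every layer), with the alternative assignment $s'$ obtained by sliding each layer one level outward. Where you stop is precisely where the content lies. The paper does not argue abstractly that \emph{some} depth and multiplicities must work; it simply exhibits them --- load levels $6,5,4,3,2,1,0$ with $1,2,5,20,60,120,120$ slots respectively --- and computes $C(s)=706$, $C(s')=352$. Your diagnosis that a three-level gadget stalls below $2$ (your $21/11$ is correct) and that one must go deeper is exactly right; the paper's fix is six load levels rather than three, with the multiplicities chosen so that from level~$4$ outward each layer's jobs spread to one per slot on the next layer. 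Since the whole proof is ``here is an instance, here are two numbers'', your plan is sound and matches the paper, but it is incomplete at the only step that carries weight: you still have to name working parameters and check $C(s)>2\,C(s')$.
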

\begin{proof}
The game has the following slots in order: first a slot with label ``$6$'', then $2$ slots with label ``$5$'', then $5$ slots with label ``$4$'', then $20$ slots with label ``$3$'', then $60$ slots with label ``$2$'', then $120$ slots with label ``$1$'', and, finally, another $120$ slots with label ``$0$''. The labels of slots signify precisely how many jobs are on them in the NE $s$. The allowable slots of a job on a slot with label ``$x$'' are all slots with labels from ``$6$'' up to ``$x-1$''. Clearly $s$ is a NE. Simple calculations show that the total cost in $s$ is $706$.

Now consider the outcome $s^*$ where all the jobs from slots with label ``$x$'', are spread as evenly as possible on the slots with label ``$x-1$''. Again simple calculations show that the total cost of $s^*$ is $352$, which gives a PoA bound $C(s)/C(s^*)>2$.\qed
\end{proof}

Identifying the precise PoA value as a function of $d$ appears to be a challenging open problem that we leave as future work.
\section{Coordination Mechanism}\label{mechanism}

In this section we design a coordination mechanism that applies a simple modification to the game, without knowing anything about the instance, and significantly improves the PoA. For the case of constant (unit) costs, we show that the PoA is brought down to 2. For $c(x)=x^d$ with $d<1$ we prove that the PoA is improved from super-constant to constant for the class of instances which have a single slot occupied in the optimal solution. This class includes instances with a common release time or a common deadline for all jobs, as well as instances with a batch of jobs centered around a common slot. We prove that, without this restriction, the PoA for $c(x)=x^d$ with $d<1$ remains super-constant even under our mechanism. Note that for $d>1$ the PoA is known to be constant even without applying any coordination mechanism.

Our mechanism changes the strategy space of each job $j$, from being simply a slot $s_j\in [r_j,d_j)$, to a pair $(s_j,\xi_j)$ where $s_j\in[r_j,d_j)$ is again the declared slot and $\xi_j\ge 0$ is a payment. The mechanism will open slot $t$ under strategies $(s,\xi)$ if and only if:
\[ \sum_{j:s_j=t}\xi_j\ge c(l_t(s)), \]
i.e., if the jobs selecting slot $s$ cover the activation cost with their declared payments. We assume every job $j$ has an infinite cost for not being processed (i.e., when slot $s_j$ is not opened), an assumption that is implicitly present in the base model as well, since jobs do not have the option of staying out of the game. In this framework, the NE condition asserts that each $(s_j,\xi_j)$ are such that:
\begin{align}\label{mechanism-nash-initial}
\sum_{j':s_{j'}=s_j}&\xi_{j'}\ge c(l_{s_j}(s))\nonumber\\
&\textrm{ and } \xi_j \le \max\left\{0, c(l_t(s)+1) - \sum_{j':s_{j'}=t}\xi_{j'}\right\}, \forall t\in[r_j,d_j)\setminus\{s_j\}\nonumber\\
&\textrm{ and } \xi_j \le \max\left\{0, c(l_{s_j}(s)) - \sum_{j':s_{j'}=s_j,j'\neq j}\xi_{j'}\right\}.
\end{align}
The first inequality guarantees slot $s_j$ is open, the second that there is no slot $t$ that job $j$ can move to, pay the minimum needed to open it (or keep it open), and get a lower payment, and the third that the job should pay as much as necessary to keep its current slot open.

\begin{lemma}\label{mechanism-lemma}
Every NE $(s,\xi)$ of our coordination mechanism is such that for every occupied slot $t$ we have $\sum_{j:s_j=t}\xi_j=c(l_t(s))$.
\end{lemma}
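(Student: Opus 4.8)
The plan is to prove the claimed equality by establishing the two inequalities $\sum_{j:s_j=t}\xi_j\ge c(l_t(s))$ and $\sum_{j:s_j=t}\xi_j\le c(l_t(s))$ separately, for an arbitrary occupied slot $t$. The first is immediate: since $t$ is occupied there is some job $j$ with $s_j=t$, and the first inequality of the NE conditions~\eqref{mechanism-nash-initial} for this job reads exactly $\sum_{j':s_{j'}=t}\xi_{j'}\ge c(l_t(s))$. So all the work is in ruling out strict over-payment on an open slot, i.e.\ in the reverse inequality.

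For the upper bound I would argue by contradiction. Suppose that on some occupied slot $t$ we have $S:=\sum_{j:s_j=t}\xi_j>c(l_t(s))$. Fix any job $j$ with $s_j=t$ and apply the third inequality of~\eqref{mechanism-nash-initial}; writing $\sum_{j':s_{j'}=t,\,j'\neq j}\xi_{j'}=S-\xi_j$, it becomes $\xi_j\le\max\bigl\{0,\,c(l_t(s))-S+\xi_j\bigr\}$. Under the assumption $S>c(l_t(s))$ the second entry of the max satisfies $c(l_t(s))-S+\xi_j<\xi_j$: if it is nonpositive, the max equals $0$ and forces $\xi_j\le 0$, hence $\xi_j=0$; if it is positive, then $\xi_j\le c(l_t(s))-S+\xi_j$ rearranges to $S\le c(l_t(s))$, contradicting the assumption. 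In either case $\xi_j=0$. Since $j$ was an arbitrary job on $t$, summing over all of them gives $S=0$, which contradicts $S>c(l_t(s))\ge 0$ (indeed $c(l_t(s))=l_t(s)^d\ge 1$ on an occupied slot). Hence $S\le c(l_t(s))$, and together with the first part $\sum_{j:s_j=t}\xi_j=c(l_t(s))$.

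I do not expect any real technical obstacle here. The only point that needs a bit of care is dealing with the $\max\{0,\cdot\}$ in the NE condition: one should notice that it suffices to split on the sign of the interior argument, and that the "interior" branch already yields the contradiction $S\le c(l_t(s))$, so it need not be excluded by a separate sub-argument. Note also that the second inequality of~\eqref{mechanism-nash-initial} (about deviating to a different slot) is not used at all for this lemma; only the first and third inequalities, describing feasibility and minimality of a job's payment on its own slot, are needed. No case analysis on the exponent $d$ is required, since the argument only uses $c\ge 0$.
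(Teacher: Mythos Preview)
Your proof is correct and follows essentially the same route as the paper's: both use the first NE condition for the inequality $\sum_{j:s_j=t}\xi_j\ge c(l_t(s))$ and then derive a contradiction from the third NE condition under the assumption of strict over-payment. The only cosmetic difference is that the paper picks a single job with $\xi_j>0$ (which exists since $S>0$) and contradicts the third inequality directly for that job, whereas you run the argument for every job on $t$ to force all $\xi_j=0$ and then contradict $S>0$; these are equivalent packagings of the same idea.
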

\begin{proof}
Suppose this is not the case. Then there exists some $t$ such that:
\[ \sum_{j:s_j=t}\xi_j>c(l_t(s)). \]
Consider any job $j$ such that $s_j=t$ and $\xi_j>0$. Clearly such a job must exist. We get:
\[ \sum_{j':s_{j'}=s_j}\xi_{j'}>c(l_{s_j}(s)) \Rightarrow \xi_j > c(l_{s_j}(s))-\sum_{j':s_{j'}=s_j,j'\neq j}\xi_{j'}, \]
which violates \eqref{mechanism-nash-initial} and gives a contradiction.\qed
\end{proof}

Using Lemma \ref{mechanism-lemma}, we may simplify the NE condition \eqref{mechanism-nash-initial} as:
\begin{equation}\label{mechanism-nash}
\sum_{j':s_{j'}=s_j}\xi_{j'}=c(l_{s_j}(s)) \textrm{ and } \xi_j \le c(l_t(s)+1) - \sum_{j':s_{j'}=t}\xi_{j'}, \forall t\in[r_j,d_j)\setminus \{s_j\}.
\end{equation}

\subsection{Unit Activation Costs}\label{coordination-unit}

\begin{lemma}\label{zero-payment-lemma}
Let $(s,\xi)$ be a NE of our coordination mechanism in a game with unit activation costs. Then for every job $j$, either $\xi_j=0$ or every slot in $[r_j,d_j)\setminus s_j$ is unoccupied.
\end{lemma}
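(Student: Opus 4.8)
The plan is to derive the statement directly from the simplified Nash condition \eqref{mechanism-nash} and Lemma \ref{mechanism-lemma}, using the fact that under unit activation costs the marginal cost of placing one more job on any slot is always exactly $1$. Since the claim is a disjunction, I would fix a job $j$ with $\xi_j>0$ (if $\xi_j=0$ there is nothing to prove) and show that an arbitrary slot $t\in[r_j,d_j)\setminus\{s_j\}$ must be unoccupied; as $t$ is arbitrary, this yields the lemma.

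First I would instantiate the deviation inequality of \eqref{mechanism-nash} at this $t$. Because $c$ is the unit cost function, $c(l_t(s)+1)=1$ no matter what the current load $l_t(s)$ is, so the inequality becomes
\[ \xi_j \le 1 - \sum_{j':s_{j'}=t}\xi_{j'}. \]
The key step is then to evaluate the sum $\sum_{j':s_{j'}=t}\xi_{j'}$: if $t$ were occupied, Lemma \ref{mechanism-lemma} would force $\sum_{j':s_{j'}=t}\xi_{j'}=c(l_t(s))=1$, and the displayed inequality would give $\xi_j\le 0$, contradicting $\xi_j>0$. Hence $t$ is unoccupied, as desired.

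I do not expect a genuine obstacle here; the statement is essentially a corollary of Lemma \ref{mechanism-lemma} together with the structure of \eqref{mechanism-nash}. The only point requiring a moment of care is the observation that for unit costs the ``opening cost'' $c(l_t(s)+1)$ collapses to $1$ for every slot, occupied or not, so that a job making a positive payment cannot afford to move onto any already-open slot; once this is noted the argument is immediate, and no case analysis on $l_t(s)$ is needed beyond the single split between $t$ occupied and $t$ unoccupied.
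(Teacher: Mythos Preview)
Your proposal is correct and is essentially identical to the paper's own proof: both assume $\xi_j>0$, instantiate the deviation inequality of \eqref{mechanism-nash} at an arbitrary $t\in[r_j,d_j)\setminus\{s_j\}$ using $c(l_t(s)+1)=1$, and then invoke Lemma~\ref{mechanism-lemma} to conclude that $t$ cannot be occupied. There is no meaningful difference in approach.
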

\begin{proof}
Suppose $\xi_j>0$. Then, by \eqref{mechanism-nash}, we get that for any slot $t\in[r_j,d_j)\setminus s_j$:
\begin{equation}\label{lemma-condition}
1 - \sum_{j':s_{j'}=t}\xi_{j'} \ge \xi_j > 0.
\end{equation}
However, by Lemma \ref{mechanism-lemma}, we know that either $t$ is unoccupied or $\sum_{j':s_{j'}=t}\xi_{j'}=1$. From this, and given \eqref{lemma-condition}, we get that $t$ is unoccupied.
\qed
\end{proof}

\begin{theorem}
The PoA of our coordination mechanism for unit activation costs is $2$.
\end{theorem}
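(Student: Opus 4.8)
The plan is to prove the matching upper and lower bounds separately, the upper bound resting on a ``witness job'' argument that exploits the interval structure of the strategy spaces.

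\emph{Upper bound.} Fix a NE $(s,\xi)$ and let $K$ be the set of occupied slots in $s$; since the activation cost is unit, $C(s)=|K|$. The first step is to extract, for each occupied slot $t$, a \emph{witness job} $w_t$ with $s_{w_t}=t$ and $\xi_{w_t}>0$; such a job must exist because Lemma \ref{mechanism-lemma} forces $\sum_{j:s_j=t}\xi_j=c(l_t(s))=1$. By Lemma \ref{zero-payment-lemma}, the interval $I_{w_t}=[r_{w_t},d_{w_t})$ of a witness contains no occupied slot other than $t$ itself. Now order the occupied slots as $t_1<t_2<\dots<t_k$ with $k=|K|$. The key claim is that if $i<i'$ with $i'\ge i+2$, then $I_{w_{t_i}}$ and $I_{w_{t_{i'}}}$ are disjoint: indeed, $I_{w_{t_i}}$ contains $t_i$ but not the occupied slot $t_{i+1}>t_i$, so $d_{w_{t_i}}\le t_{i+1}$; similarly $I_{w_{t_{i'}}}$ contains $t_{i'}$ but not the occupied slot $t_{i'-1}<t_{i'}$, so $r_{w_{t_{i'}}}>t_{i'-1}\ge t_{i+1}\ge d_{w_{t_i}}$, and disjointness follows. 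Consequently the odd-indexed witnesses $w_{t_1},w_{t_3},\dots$ form a family of $\lceil k/2\rceil$ jobs with pairwise disjoint intervals. In any assignment, and in particular in an optimal $s^*$, jobs with disjoint intervals occupy distinct slots, so $C(s^*)=|\{t:l_t(s^*)\ge 1\}|\ge\lceil k/2\rceil\ge k/2$, giving $C(s)/C(s^*)\le 2$.

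\emph{Lower bound.} I would exhibit a small instance: three slots $1,2,3$ (formally take $T=5$), a job $a$ that may use slots $\{1,2\}$ (i.e.\ $r_a=1,d_a=3$) and a job $b$ that may use slots $\{2,3\}$ (i.e.\ $r_b=2,d_b=4$), with unit costs. The profile $(s_a,\xi_a)=(1,1)$, $(s_b,\xi_b)=(3,1)$ is a NE: each of slots $1$ and $3$ carries a single job paying exactly its activation cost $1$, and the only alternative slot for either job is the unoccupied slot $2$, where a deviating job would have to pay $c(1)=1$, so condition \eqref{mechanism-nash} holds with equality. This NE has cost $2$, while placing both jobs on slot $2$ has cost $c(2)=1$, so the PoA is at least $2$. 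Combined with the upper bound, this yields PoA $=2$.

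The step that needs real care is the disjointness claim for witnesses two apart: this is exactly where the line/interval structure of the strategy spaces is essential, together with the fact (through Lemma \ref{zero-payment-lemma}) that a paying job is ``trapped'' between two consecutive occupied slots. The remaining ingredients --- counting occupied slots, converting pairwise-disjoint intervals into distinct slots, and checking the lower-bound instance --- are routine.
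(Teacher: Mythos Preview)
Your proof is correct, and the upper-bound argument is genuinely different from the paper's. The paper charges \emph{optimal-slot by optimal-slot}: for each slot $t$ used in $s^*$ it shows that the jobs $J(t)$ assigned there pay at most $2$ total in the NE, by a case split on whether $t$ is open in $s$ (if so, jobs on $t$ pay at most $1$ and all others in $J(t)$ freeload by Lemma~\ref{zero-payment-lemma}; if not, the jobs in $J(t)$ to the right of $t$ pay at most $1$ total and symmetrically to the left). You instead argue \emph{NE-slot by NE-slot}: from each occupied slot in $s$ you extract a positively-paying witness whose interval, by Lemma~\ref{zero-payment-lemma}, is trapped strictly between the two neighbouring occupied slots; this forces the odd-indexed witnesses to have pairwise disjoint windows and hence to occupy $\lceil k/2\rceil$ distinct slots in \emph{any} assignment, giving $C(s^*)\ge k/2$. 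Both routes hinge on the same structural fact (a paying job sees no other open slot), but yours converts it into a packing lower bound on the optimum, while the paper converts it into a charging upper bound on NE payments. Your approach is arguably cleaner and makes the role of the interval structure more explicit; the paper's approach, on the other hand, localises the factor $2$ to each optimal slot, which can be convenient if one later wants per-slot or per-job guarantees. Your lower-bound instance is the paper's (with $b$ on slot $3$, as is needed for the NE check).
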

\begin{proof}
Focus on a slot $t$ and the set of jobs $J(t)$ that use it in a given optimal assignment. The total cost paid by these jobs in the optimal assignment is $1$. We will show that the same set of jobs pay at most $2$ in any NE $(s,\xi)$.

We examine two cases. First the case when slot $t$ is open in $(s,\xi)$. The jobs from $J(t)$ that use $t$ in $(s,\xi)$ pay at most $1$, as given by Lemma \ref{mechanism-lemma} (they might be paying less than $1$ as $t$ might have additional jobs outside $J(t)$). The jobs in $J(t)$ that use other slots pay $0$, as given by Lemma \ref{zero-payment-lemma} and the fact that all of these jobs have an occupied slot in their windows other than the one they are using, namely, slot $t$. This proves that, in this case, the total payments of jobs in $J(t)$ are at most $1$.

We now focus on the case when slot $t$ is not open in $(s,\xi)$. We first show that the jobs from $J(t)$ who are using slots larger than $t$ are paying a total of at most $1$. Focus on the smallest such slot $t'$. By the fact that $(s,\xi)$ is a NE and using Lemma \ref{mechanism-lemma} we get that the total payments on $t'$ are $1$. Now focus on any slot $t''>t'$ and any job $j\in J(t)$ that uses $t''$. Slot $t'$ must be in the window of job $j$ as both $t''$ and $t$ satisfy this property and $t'$ lies between them. Then, by Lemma \ref{zero-payment-lemma} we get that $\xi_j=0$. This proves that the total payment over all jobs $j$ that use slots larger than $t$ is $1$. The proof is symmetric for slots smaller than $t$, which shows that for every slot $t$ in an optimal solution and for the jobs $J(t)$ using it, the total payment of the jobs $J(t)$ in a NE is at most $2$. This directly implies an upper bound of $2$ for the PoA.

We now present a lower bound of $2$ for the PoA. There exist two jobs $j_1,j_2$. Job $j_1$ can be scheduled at times $1$ or $2$, whereas job $j_2$ can be scheduled at times $2$ or $3$. Assigning job $j_1$ to slot $1$, job $j_2$ to slot $2$, and setting $\xi_1=\xi_2=1$ gives rise to a NE with total cost $2$. It is easy to verify that this is a NE as each job has only one alternative slot, slot $2$, where again they would have to pay a unit cost. The optimal assignment places both jobs in slot $2$ for total cost $1$.\qed
\end{proof}

We close with a note on existence of a NE for our mechanism. In fact we prove that for any optimal solution there exist payments that will yield a NE.

\begin{theorem}
For unit activation costs, our coordination mechanism induces games such that, for every optimal assignment $s^*$, there exists a vector of payments $\xi$ such that $(s^*,\xi)$ is a NE. 
\end{theorem}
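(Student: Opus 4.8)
The plan is to fix an arbitrary optimal assignment $s^*$ and write down explicit payments that satisfy the simplified equilibrium condition \eqref{mechanism-nash}. Recall that for unit activation costs $c(l)=1$ for every $l\ge 1$, so \eqref{mechanism-nash} for the pair $(s^*,\xi)$ reduces to: (i) $\sum_{j:s^*_j=t}\xi_j=1$ for every slot $t$ occupied in $s^*$, and (ii) $\xi_j\le 1-\sum_{j':s^*_{j'}=t'}\xi_{j'}$ for every job $j$ and every slot $t'\in[r_j,d_j)\setminus\{s^*_j\}$.

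The core of the argument is a combinatorial claim about optimal assignments: for every slot $t$ that is occupied in $s^*$ there exists at least one job $j$ with $s^*_j=t$ whose window $[r_j,d_j)$ contains no occupied slot other than $t$; call such a job a \emph{private} job of $t$. I would prove this by an exchange argument: if every job on $t$ had some other occupied slot inside its window, we could reassign each such job to one of those occupied slots, thereby emptying slot $t$ while opening no new slot, which strictly decreases the total cost by the cost $1$ of slot $t$ and contradicts optimality of $s^*$.

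Given the claim, I define the payments as follows: for each occupied slot $t$ pick one private job $j(t)$ and set $\xi_{j(t)}=1$; set $\xi_j=0$ for every other job. Since each job occupies exactly one slot, the chosen jobs $j(t)$ are distinct, and each occupied slot receives total payment exactly $1$, so (i) holds; moreover $\sum_{j':s^*_{j'}=t'}\xi_{j'}$ equals $1$ on every occupied slot $t'$ and $0$ on every unoccupied slot. For (ii): if $\xi_j=0$ the inequality is immediate since its right-hand side is $0$ or $1$; if $\xi_j=1$ then $j=j(t)$ with $t=s^*_j$, and by the defining property of a private job every other slot $t'$ in its window is unoccupied, so the right-hand side equals $1=\xi_j$. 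Hence $(s^*,\xi)$ satisfies \eqref{mechanism-nash} and is a NE.

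The main obstacle is establishing the combinatorial claim; the rest is bookkeeping. In the exchange argument one must check that every reassigned job stays inside its window (automatic, since each job is moved to an occupied slot that lies in its own window) and that no previously unoccupied slot is opened (automatic, since jobs move only to already occupied slots), so that the cost drops by exactly $1$ with nothing added back — which is precisely what produces the contradiction with optimality.
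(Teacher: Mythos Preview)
Your proof is correct and follows essentially the same approach as the paper: you prove the same combinatorial claim (every occupied slot in an optimal assignment hosts a job whose window contains no other occupied slot) via the same exchange argument, and then assign the full unit payment to one such ``private'' job per slot and zero to everyone else. Your verification of the NE condition is slightly more explicit than the paper's, but the argument is identical in substance.
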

\begin{proof}
We first claim that an optimal assignment $s^*$ has the property that, on every occupied slot $t$, there exists at least one job such that no other slot in its allowed interval is occupied. If this is not true for some $t$, then moving all jobs $j$ with $s_j=t$ to some other occupied slot in their intervals will decrease the total cost by $1$, contradicting optimality of $s^*$. Given the above, we can find a job $j$ on every slot $t$ that has no other occupied slot in its interval and charge it the full cost of the slot by setting $\xi_j=1$. Every other job pays $0$. This solution will be a NE since the open slot costs are covered and all jobs who do not freeload, pay a unit cost and would pay the exact same cost if they were to deviate to any other slot in their intervals, since they are all unoccupied.\qed
\end{proof}

\subsection{Monomial Activation Costs with $d<1$}\label{coordination-monomial}

We first restrict ourselves to instances such that the optimal solution uses a single slot. Clearly this is the case if and only if there is some slot that is included in the interval of every job in the instance. This is the case, for example, when the jobs have common release times or deadlines. Note that by the instance in Lemma \ref{decreasing-lower-bound} (or simple modifications of it for the case of a common release time or a common deadline) we get that the PoA is infinite in the original game. In the next theorem we prove that our coordination mechanism reduces the PoA to a constant for every $d<1$ for the instances under consideration which we call {\em common slot instances}.

\begin{theorem}
For common slot instances and $c(x)=x^d$ with $d<1$, the PoA of our coordination mechanism is $\Theta(1)$.
\end{theorem}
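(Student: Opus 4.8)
The plan is to show that every Nash equilibrium of the mechanism on a common slot instance has total cost at most a constant (depending only on $d$) times the optimum. First I would record that, if $t_0$ denotes a slot contained in every job's window, then placing all $n$ jobs on $t_0$ is feasible, and since $c(x)=x^{d}$ is subadditive (being concave with $c(0)=0$), this assignment is optimal, so $C(s^{*})=c(n)=n^{d}$. Now fix a NE $(s,\xi)$. By Lemma \ref{mechanism-lemma} the payments on every occupied slot sum exactly to its activation cost. Partition the occupied slots other than $t_0$ into those to its right, $t_0<u_1<\cdots<u_p$ with loads $\lambda_1,\dots,\lambda_p$, and those to its left, $t_0>w_1>\cdots>w_q$ with loads $\mu_1,\dots,\mu_q$, and let $l_0$ be the load of $t_0$.

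The crux is the claim that the loads grow at least geometrically towards the far ends: $\lambda_i\ge\gamma\,\lambda_{i-1}$ and $\mu_i\ge\gamma\,\mu_{i-1}$ for all relevant $i$, where $\gamma:=d^{1/(d-1)}>1$ (positivity of the exponent's sign gives $\gamma>1$ since $\ln\gamma=\tfrac{\ln d}{d-1}>0$). For the right side, observe that any job $j$ sitting on $u_i$ has $r_j\le t_0$ and $u_i<d_j$, hence $[t_0,u_i]\subseteq[r_j,d_j)$, so in particular $u_{i-1}$ is an available slot for $j$. The simplified NE condition \eqref{mechanism-nash} applied to the deviation to $u_{i-1}$, together with Lemma \ref{mechanism-lemma}, gives $\xi_j\le c(\lambda_{i-1}+1)-c(\lambda_{i-1})\le d\,\lambda_{i-1}^{d-1}$, the last inequality by concavity of $c$. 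Summing over the $\lambda_i$ jobs on $u_i$, whose payments total $c(\lambda_i)=\lambda_i^{d}$, yields $\lambda_i^{d}\le d\,\lambda_i\,\lambda_{i-1}^{d-1}$, i.e.\ $(\lambda_i/\lambda_{i-1})^{d-1}\le d$, which rearranges to $\lambda_i\ge\gamma\,\lambda_{i-1}$ since $d-1<0$. The argument for the $\mu_i$ is symmetric.

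Given geometric growth, $\lambda_{i}\le\gamma^{-(p-i)}\lambda_p$, so $\sum_{i=1}^{p}\lambda_i^{d}\le\lambda_p^{d}\sum_{k\ge0}\gamma^{-kd}=\lambda_p^{d}/(1-\gamma^{-d})\le n^{d}/(1-\gamma^{-d})$, using $\lambda_p\le n$ and $\gamma^{-d}<1$; the same bound holds for $\sum_i\mu_i^{d}$. Adding the contribution $l_0^{d}\le n^{d}$ of slot $t_0$ gives $C(s)\le n^{d}\bigl(1+2/(1-\gamma^{-d})\bigr)$, hence a PoA upper bound of $1+2/(1-\gamma^{-d})=O(1)$. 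The matching $\Omega(1)$ bound is immediate since the PoA is always at least $1$ (and a simple instance in which several jobs all contain $t_0$ while one of them cannot leave $t_0$ shows the PoA is in fact bounded away from $1$). The main obstacle is the geometric-growth claim, and its one delicate point is exactly that a job able to reach a far slot $u_i$ is forced to also be able to reach every occupied slot lying between $t_0$ and $u_i$; this is where the common-slot hypothesis is used, and it is precisely what fails for general instances.
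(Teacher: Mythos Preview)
Your proof is correct and follows essentially the same approach as the paper: both establish that in any NE the loads on the occupied slots to one side of the common slot grow at least geometrically with ratio $\gamma=d^{1/(d-1)}>1$ as one moves away from $t_0$, and then bound the total cost by summing the resulting geometric series. The one minor difference is in how the geometric-growth inequality is extracted from the NE condition: the paper picks a single job on the farther slot paying at least the average share $c(y)/y$ and compares it to the marginal $c(x+1)-c(x)$, whereas you sum the bound $\xi_j\le c(\lambda_{i-1}+1)-c(\lambda_{i-1})$ over all $\lambda_i$ jobs and use that their payments total $c(\lambda_i)$; both routes yield $(\lambda_i/\lambda_{i-1})^{d-1}\le d$. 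Your use of concavity to get $c(x+1)-c(x)\le dx^{d-1}$ is the cleaner way to handle that step, and your final constant $1+2/(1-\gamma^{-d})$ differs from the paper's $2/(1-\gamma^{-d})$ only because you bound the contribution of $t_0$ separately rather than folding it into the heavier side.
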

\begin{proof}
For simplicity of exposition, suppose that, in a given game, we shift time so that the only slot used by the optimal solution $s^*$ is slot $0$. Let $s$ be a NE of the game. For simplicity and without loss of generality (due to symmetry), suppose the positive slots have a larger or equal cost compared to the non-positive slots in $s$. If the number of positive slots used in $s$ is $1$, then a bound of $2$ on the PoA of the instance follows trivially, so we assume there are at least $2$ positive slots used in the game. 

Let $t$ and $t'$ be used slots in $s$ such that $0<t<t'$. Let $x$ be the number of jobs on $t$ and $y$ the number of jobs on $t'$. Observe that $t$ must lie in the allowed interval of all jobs using $t'$ since $0<t<t'$ and these jobs use $t'$ in $s$ and $0$ in $s^*$. This means each of these jobs has the option to move to $t$ and pay the marginal contribution $c(x+1)-c(x)$. Also note that, by Lemma \ref{mechanism-lemma}, at least one of the jobs on $t'$ has to pay, in $s$, the average cost share $c(y)/y$. The above, combined with the equilibrium condition \eqref{mechanism-nash}, imply:
\begin{equation}\label{good-property}
c(x+1)-c(x) \ge \frac{c(y)}{y}.
\end{equation}
Note that:
\begin{align}\label{marginal}
c(x+1)-c(x)&=(x+1)^d-x^d\nonumber\\
&=x^{d-1}\frac{\left(1+\frac{1}{x}\right)^d-1^d}{\frac{1}{x}}\nonumber\\
&\ge x^{d-1}\lim_{x\to+\infty}\frac{\left(1+\frac{1}{x}\right)^d-1^d}{\frac{1}{x}}\nonumber\\
&=d x^{d-1}
\end{align}
Combining \eqref{good-property} with \eqref{marginal}, we get:
\[ d x^{d-1} \ge y^{d-1} \Rightarrow y\ge d^{\frac{1}{d-1}} x. \]
This suggests that every slot $t$ must have at least $d^{1/(d-1)}$ (which is always larger than $e$) times the number of jobs as every other slot in $[0,t)$. This in turn implies that as we move from the largest slot closer and closer to $0$, the number of jobs decreases by at least a factor $d^{1/(d-1)}$. We will write $\alpha=d^{1/(d-1)}$. Then, if $h$ is the number of jobs on the last occupied slot, we get that the cost on positive slots is at most:
\[ \sum_{j=0}^{+\infty}\left(\frac{h}{\alpha^j}\right)^d=h^d\sum_{j=0}^{+\infty}\left(\frac{1}{\alpha^d}\right)^j=\frac{h^d}{1-\frac{1}{\alpha^d}}=\frac{h^d}{1-d^{d/(1-d)}}. \]
Recall we have assumed that positive slots have at least as much cost as the non-positive ones, meaning the total cost of $s$ is at most:
\[ C(s) \le \frac{2h^d}{1-d^{d/(1-d)}}. \]
By the fact that we have $h$ jobs on the largest slot, there at least $h$ jobs in the game and we get:
\[ C(s^*) \ge h^d. \]
Taking the ratio gives:
\[ \frac{C(s)}{C(s^*)} \le \frac{2}{1-d^{d/(1-d)}}, \]
which is a constant for every given $d$.\qed
\end{proof}

\begin{remark}
For $c(x)=\sqrt{x}$ and for common slot instances with $n$ jobs, our coordination mechanism reduces the PoA from $4n^{1/4}/3$ to at most $4$.
\end{remark}

We note that for common slot instances there always exists a NE. In fact, again we prove that any optimal solution can be a NE with the correct payment vector.

\begin{theorem}
For $c(x)=x^d$ with $d<1$ and common slot instances, our coordination mechanism induces games such that, for every optimal assignment $s^*$, there exists a vector of payments $\xi$ such that $(s^*,\xi)$ is a NE. 
\end{theorem}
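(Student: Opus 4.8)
The plan is to write down the payment vector explicitly. Shift time so that the single slot used by the optimal assignment $s^*$ is slot $0$; by the characterization of common slot instances, slot $0$ lies in every window $[r_j,d_j)$, so in $s^*$ all $n$ jobs occupy slot $0$ and every other slot is empty. By the simplified equilibrium conditions \eqref{mechanism-nash} (equivalently \eqref{mechanism-nash-initial}, via Lemma \ref{mechanism-lemma}), a vector $\xi\ge 0$ makes $(s^*,\xi)$ a NE exactly when: (i) $\sum_{j}\xi_j=c(n)=n^d$, so slot $0$ is opened with its cost exactly covered; and (ii) for every job $j$ and every other slot $t$ of its window, $\xi_j\le c(l_t(s^*)+1)-\sum_{j':s_{j'}=t}\xi_{j'}=c(1)=1$, since every such $t$ is empty in $s^*$. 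Hence it suffices to express $n^d$ as a sum of $n$ reals lying in $[0,1]$.

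This is possible precisely because $d<1$, which forces $n^d\le n$. Concretely I would take $\xi_j=n^{d-1}$ for every job $j$. Then $\sum_j\xi_j=n\cdot n^{d-1}=n^d=c(n)$, so (i) holds; and since $n\ge 1$ and $d-1<0$ we have $n^{d-1}\le 1$, so (ii) holds, i.e., no job can gain by moving to another (necessarily empty) slot of its window, where it would have to pay the full activation cost $c(1)=1\ge\xi_j$. It is worth stressing why this must be a spread-out payment: the ``one job bankrolls the slot'' recipe used in the unit-cost existence proof fails here, because charging a single job the full $n^d>1$ (for $n\ge 2$) would give it a strict incentive to flee to an empty slot it can open alone at cost $1$; dividing the cost evenly is exactly what repairs this, and $d<1$ is what makes an even split feasible.

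There is essentially no obstacle beyond bookkeeping, so the ``main obstacle'' is really just the observation of the previous paragraph that a uniform (rather than concentrated) payment is needed. The remaining routine points to verify are that the chosen $\xi$ meets all three clauses of the unsimplified condition \eqref{mechanism-nash-initial} and not merely the streamlined \eqref{mechanism-nash} (the third clause collapses to $\xi_j\le\xi_j$ and the second to $\xi_j\le 1$), and the degenerate case $n=1$, where the lone job pays $\xi_1=1=c(1)$ and again has no profitable deviation since any alternative empty slot also costs $1$ to open.
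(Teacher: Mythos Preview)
Your proof is correct and follows essentially the same approach as the paper: both assign each job the uniform fair share $n^{d-1}$ and observe that this is at most the cost $c(1)=1$ of opening any alternative (empty) slot. Your write-up is more detailed (explicitly checking all clauses of \eqref{mechanism-nash-initial}, noting why a concentrated payment fails, and handling $n=1$), but the underlying argument is identical.
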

\begin{proof}
An optimal assignment will clearly place all jobs on the same slot. Charging each job the fair share $x^{d-1}$ results in a NE since $x^{d-1}<1$ with $1$ being the cost any job would have to pay to deviate to a different slot and open it.\qed
\end{proof}

Earlier in the section we proved that our coordination mechanism reduces the PoA to a constant for common slot instances. On the contrary, we show that for general instances, the PoA remains super-constant even for our mechanism.

\begin{theorem}\label{bad}
For a game with $n$ jobs and $c(x)=x^d$ with $d<1$, the worst case PoA of our mechanism is $\Omega(n^{d(1-d)})$.
\end{theorem}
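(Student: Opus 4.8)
The plan is to exhibit a family of instances, parametrized by a positive integer $h$, on which our coordination mechanism admits a NE whose cost exceeds the optimal cost by a factor $\Omega(h^{d(1-d)})$, and then relate $h$ to the number of jobs $n$. The construction should mimic the lower bound of Lemma \ref{decreasing-lower-bound} but exploit the crucial difference under the mechanism: to open a slot, the jobs on it must together cover $c(l_t) = l_t^d$, which means a single job moving to an empty slot must pay the full cost $1$, whereas moving to an occupied slot costs only the marginal contribution. This is exactly the asymmetry that Lemma \ref{zero-payment-lemma} exploited in the unit-cost case to collapse the PoA; for $d<1$ the marginal contribution $c(x+1)-c(x) \approx d x^{d-1}$ decays, so congested slots are very attractive, and we want to build an instance where equilibrium jobs pile onto a geometric cascade of progressively less-loaded slots — but now the geometric cascade is only forced to have ratio $\alpha = d^{1/(d-1)}$ per step (as derived in the common-slot proof via \eqref{marginal}), rather than being broken entirely. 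The new feature relative to the common-slot case is that, without a common slot, we can chain many such cascades so that the total equilibrium cost accumulates over a long time horizon while the optimum stays small by spreading jobs onto unit-occupancy slots.

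Concretely, I would build the instance as a sequence of ``blocks'' laid out along the time line. In a single block, place slots whose equilibrium occupancies are $h, h/\alpha, h/\alpha^2, \ldots, 1$ (so roughly $\log_\alpha h$ slots), with job intervals arranged so that a job sitting on the occupancy-$x$ slot has, within its window, the occupancy-$(x\alpha)$ slot to one side (where it could only pay the marginal contribution $\approx d(x\alpha)^{d-1}$, which by the choice of $\alpha$ is at least $x^{d-1}/x = $ its current average share, so it has no incentive to move) and, to the other side, only empty slots or slots of its own block, so that deviating there costs the full $1 > x^{d-1}$. The per-block equilibrium cost is then $\sum_{i\ge 0}(h/\alpha^i)^d = \Theta(h^d)$, while the optimal solution for that block spreads its $\Theta(h)$ jobs onto $\Theta(h)$ distinct unit-occupancy slots (there is no common slot, so this is allowed), costing $\Theta(h)$. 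Using $B$ disjoint blocks gives equilibrium cost $\Theta(B h^d)$ and optimal cost $\Theta(B h)$, a ratio of $\Theta(h^{d-1})$ — which goes the wrong way and is in fact less than $1$. So a naive replication does not work: the optimum is too cheap.

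The fix — and I expect this to be the main obstacle — is to force the optimal solution itself to be congested, so that $C(s^*)$ is $\Theta(h^d)$ per block rather than $\Theta(h)$, while keeping the equilibrium cost as large as possible. The target exponent $d(1-d)$ suggests the right tuning: we should choose the per-block optimum to put $\Theta(h^{1-d})$ jobs on a slot (costing $\Theta(h^{d(1-d)})$ per slot), arrange $\Theta(h^d)$ such optimal slots so the per-block optimum is $\Theta(h)$ — no wait; instead we want the equilibrium cost per block to be $\Theta(h^d)$ and the optimum per block to be $\Theta(h^{d\cdot d} )$ by packing the $\Theta(h)$ jobs of a block into $\Theta(h^{1-d})$ slots of occupancy $\Theta(h^{d})$ each... the bookkeeping here is precisely where one must be careful, matching the slot labels, the interval structure forbidding better equilibrium deviations, and the counting $n = \Theta(\text{(number of blocks)}\cdot h)$ to land the stated $\Omega(n^{d(1-d)})$. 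I would set up the block so that a job in the equilibrium cascade sees, as its only non-cascade alternatives, slots that in the optimum have occupancy $\Theta(h^{1-d})$ but would cost that job $c(\Theta(h^{1-d})+1) - c(\Theta(h^{1-d})) = \Theta(h^{(1-d)(d-1)}) = \Theta(h^{-(1-d)^2})$ to join — still small, so I must instead make those slots lie outside the job's window, or make the window structure force every deviation to either an empty slot (cost $1$) or a same-or-more-loaded cascade slot (no improvement). Once the interval incidences are fixed to guarantee these two properties simultaneously for every equilibrium job, the NE verification via \eqref{mechanism-nash} and Lemma \ref{mechanism-lemma} is routine, the cost computations are geometric sums as above, and taking $n = h^{\Theta(1)}$ with the appropriate number of blocks yields the bound. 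The delicate point throughout is the joint feasibility of ``equilibrium is a bad cascade'' and ``optimum is cheap but not so cheap that the ratio collapses,'' which pins down both the geometry of the blocks and the exponent $d(1-d)$.
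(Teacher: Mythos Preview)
Your proposal does not reach a construction; you oscillate between parameter choices and never fix a block whose equilibrium and optimum you can compute. The difficulty you keep running into --- that every job in your cascade has an attractive deviation to a more-loaded slot at marginal cost $c(x+1)-c(x)$ --- is real, and your attempts to wall off those deviations by interval geometry while simultaneously keeping the optimum expensive do not close.

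The idea you are missing is much simpler than cascades: use jobs whose interval is a \emph{single} slot. Such a job cannot deviate at all, so in any NE it must pay whatever is needed to open its slot, and every other job on that slot can pay $0$ and freeload. The paper's construction takes $n^d$ slots, puts one such captive job on each, and lets the remaining $n-n^d$ jobs (whose intervals cover all slots) spread evenly, $n^{1-d}$ per slot, each paying $0$. This is a NE: a freeloader moving to any other slot would owe $c(n^{1-d}+1)-c(n^{1-d})>0$, and the captive job cannot move. The NE cost is $n^d\cdot(n^{1-d})^d=n^{2d-d^2}$. For the optimum, send all unrestricted jobs to slot $1$; the cost is $(n-n^d+1)^d+(n^d-1)=\Theta(n^d)$, and the ratio is $n^{d(1-d)}$.

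So the key is not a geometric cascade but rather anchoring each slot with a job that has no alternative; once you see that, the exponent $d(1-d)$ falls out of balancing the number of slots ($n^d$) against the per-slot load ($n^{1-d}$).
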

\begin{proof}
Consider a large number of jobs $n$ such that $n^d$ is integer. Our instance has $n^d$ slots and $n^d$ jobs $1,2,\ldots,n^d$ such that job $j$ can only be scheduled on slot $j$. All other $n-n^d$ jobs can be scheduled on any slot. Let $s^*$ be the assignment where every one of the unrestricted jobs is scheduled in slot $1$. We get:
\[ C(s^*) = \left(n-n^d+1\right)^d+\left(n^d-1\right) = \Theta\left(n^d\right). \]
The first term of the sum comes from slot $1$ and the second term comes from slots $2,3,\ldots,n^d$ which hold one job each.

Now consider the following NE $s$. The unrestricted jobs are split equally among all slots, with each slot having $n^{1-d}-1$ of them. The payments declared by the unrestricted jobs are $0$ and the full cost of each slot is paid for by the corresponding restricted job. This outcome is a NE since the unrestricted jobs get to freeload while the restricted jobs can't move to a different slot and have to pay enough to keep their slot open and avoid the large cost of remaining unscheduled. We get:
\[ C(s) = n^d \left(n^{1-d}\right)^d = \Theta\left(n^{2d-d^2}\right). \]
Taking the ratio $C(s)/C(s^*)$ completes the proof.\qed
\end{proof}
\section{Conclusion and Open Problems}\label{conclusion}

In this work we tackled the problem of load-dependent server activation costs in real-time scheduling systems, a question that was posed as an open problem in \cite{T18}. We precisely characterized the price of anarchy for monomial cost functions. Furthermore, we came up with a novel coordination mechanism that achieved a spectacular improvement of the price of anarchy in the original model of unit activation costs and in a restricted subclass of instances in our extended model.

There are several follow-up questions that emerge from our work. These include extending our results to wider classes of cost functions and settling the PoA gap between the constant bounds for $d>1$ (e.g., between the $2.00568$ lower bound we prove in this work and the standard $2.5$ upper bound for $d=2$). Most importantly, it is interesting to see whether coordination mechanisms of the type that we define here can yield similar price of anarchy improvements in other important problems in algorithmic game theory or in the setting where the current incarnation of the mechanism fails in our model: monomial costs with $d<1$. Our mechanism relies on the simple idea of offloading the cost sharing aspect of the problem to the jobs themselves. This induces a setting where any deviation is charged the marginal contribution of the job on the new slot. Variants of the mechanism that impose, e.g., upper bounds on the cost share of a job/player might prove useful in these endeavors (for instance, such a modification would break the bad example in Lemma \ref{bad}).

\bibliographystyle{abbrv}
\bibliography{cost-sharing-bib}

\end{document}